\documentclass[11pt]{article}
\usepackage[margin=1.05in]{geometry}
\usepackage{amsmath,amssymb,amsthm,bm}
\usepackage{graphicx}
\usepackage{booktabs}
\usepackage[round]{natbib}
\usepackage[colorlinks=true,linkcolor=blue,citecolor=blue,urlcolor=blue]{hyperref}

\newtheorem{proposition}{Proposition}
\newtheorem{corollary}{Corollary}
\newtheorem{definition}{Definition}
\newtheorem{remark}{Remark}

\newcommand{\E}{\mathbb{E}}
\newcommand{\R}{\mathbb{R}}
\newcommand{\Var}{\mathrm{Var}}
\newcommand{\Cov}{\mathrm{Cov}}
\newcommand{\N}{\mathcal{N}}
\newcommand{\da}{\bar\alpha}

\title{Denoising Subordinated Probabilistic Models:\\
Diffusion with a Tempered-Stable Volatility Clock,\\
and What the Noise Mechanism Actually Controls}
\author{Junchi Shen, Helin Zhao \thanks{Working draft. All derivations and every number
in this paper are reproduced by the accompanying scripts
\texttt{01\_sampler\_check.py}--\texttt{07\_blind.py} (fixed seeds, single
laptop).}}
\date{July 2026}

\begin{document}
\maketitle

\begin{abstract}
Heavy-tailed extensions of denoising diffusion models replace the Gaussian
noise by a Gaussian \emph{variance mixture} $\varepsilon_i=\sqrt{A_i}z_i$:
denoising L\'evy probabilistic models (DLPM) take the mixing variables
$A_i$ i.i.d.\ across coordinates, Student-$t$ EDM takes a single $A$ shared
by all coordinates. Neither extreme has \emph{dynamics}, yet temporal
dependence of the noise amplitude---volatility clustering---is the defining
stylized fact of the financial data these models target. We introduce the
\emph{Denoising Subordinated Probabilistic Model} (DSPM), whose mixing
vector is a stationary AR(1) chain driven by tempered-stable subordinator
increments---the discrete-time Barndorff-Nielsen--Shephard stochastic
volatility process---running along the data axis. Conditionally on the
chain, the entire DDPM machinery survives verbatim, and we prove closed-form
identities linking the chain parameters to the excess kurtosis and the
squared-noise autocorrelation, yielding an exactly identified,
\emph{analytically invertible} calibration with a falsifiable feasibility
bound $\rho_{r^2}(1)<K/(3(K+2))$. DDPM, DLPM-type stable noise, and
shared-scale Student-$t$-type noise are boundary cases of one memory
parameter. We then prove a result that delimits the entire design space: if
the denoiser is conditioned on $A$ (as sufficiency demands), the mixing law
is a \emph{nuisance}---in the exact-denoiser limit the generated
distribution is invariant to it, and interventions on $A$ have no effect.
Controlled experiments (identical denoiser, schedule, budget; only the law
of $A$ or the conditioning bit varies) confirm both halves: conditioned
models match the data's clustering regardless of the mixing law, and a
designed $\times8$ volatility shock moves the generated amplitude envelope
by less than $13\%$ against a naive prediction of $+183\%$; \emph{blind}
models transmit the noise structure exactly as calibrated---lag-one
squared-return autocorrelation $0.151$ against the calibrated $0.161$ with
the predicted geometric decay, while the i.i.d.\ variant's clustering
collapses to $0.008$. The mixing law controls what structure the noise
\emph{offers}; the conditioning bit controls whether it \emph{transmits}.
Finally we close the loop (v2): coupling the chain to the data through a
variational volatility encoder (with a log-normal volatility prior, for
closed-form KL) restores control---provided the objective retains the
stochastic-volatility likelihood whose log-determinant term the
simplified denoising loss discards; naive $\beta$-weighted couplings
collapse or leak, and both failure modes are documented. With it, the
same $\times8$ intervention scales the envelope by $3.07$ (naive scaling:
$2.83$), the encoder tracks the true latent volatility of held-out paths
(per-path correlation $0.57$; $0.76$ pooled), and the learned prior
memory moves from its moment-calibrated initialization toward the true
persistence. The cost is substantial and stated plainly: the coupled
model's unconditional kurtosis falls below even the Gaussian
baseline---fidelity and control are demonstrated separately, not yet
jointly.
\end{abstract}

\section{Introduction}

Score-based and denoising diffusion models
\citep{sohl2015,ho2020ddpm,song2021sde} corrupt data with Gaussian noise
and learn to invert the corruption. For financial time series the Gaussian
choice discards the domain's structure. Fifty years of empirical finance
have fixed a short list of \emph{stylized facts} of asset returns
\citep{cont2001}: (i) heavy-tailed marginals with tail index
$\zeta\approx3$ \citep{gopikrishnan1999}; (ii) negligible linear
autocorrelation; (iii) strong, slowly decaying autocorrelation of squared
returns---\emph{volatility clustering}; (iv) aggregational Gaussianity.

Recent work addresses fact (i) inside the diffusion framework by replacing
the Gaussian. Through the classical variance-mixture representation
$\varepsilon=\sqrt{A}Z$, the published proposals sit at two extremes of one
axis. Denoising L\'evy probabilistic models (DLPM,
\citealp{shariatian2024dlpm}) and the L\'evy-It\^o model
\citep{yoon2023lim} use $\alpha$-stable noise: mixing variables i.i.d.\
across coordinates, \emph{no memory}. Student-$t$ EDM
\citep{pandey2025tedm} uses multivariate Student-$t$ noise: one
inverse-gamma mixing variable per sample, \emph{infinite memory}. Neither
can represent fact (iii) in its noise mechanism, and financial econometrics
identified the missing object decades ago: the mixing variable is the flow
of economic time \citep{clark1973}, it is \emph{persistent}, and its
canonical continuous-time model is an Ornstein--Uhlenbeck process driven by
a (tempered) L\'evy subordinator---Barndorff-Nielsen--Shephard (BNS)
stochastic volatility \citep{bns2001}, embedded in the time-changed L\'evy
consensus of \citet{carr2003sv}, with exponential tempering
\citep{koponen1995,cgmy2002} curing the infinite variance of pure stable
laws \citep{mantegna1994}.

This paper makes two contributions that pull in deliberately opposite
directions, and we believe the tension is the point.

\paragraph{Constructive contribution.} We define the DSPM
(Section~\ref{sec:model}): a DDPM whose per-coordinate noise variance is a
stationary tempered-stable AR(1) chain along the \emph{data} axis.
Conditionally on the chain the Gaussian posterior, the closed-form ELBO and
ancestral sampling hold verbatim (Propositions~\ref{prop:posterior},
\ref{prop:elbo})---the DDPM derivations never use independence of the
mixing variables. The excess kurtosis and squared-noise autocorrelation of
the induced noise are closed-form in the chain parameters
(Proposition~\ref{prop:moments}); the calibration map is exactly identified
with an analytic inverse and a falsifiable feasibility bound
(Corollary~\ref{cor:calib}); and DDPM, DLPM-type and t-EDM-type noise are
boundary cases of the single memory parameter $\varphi$
(Proposition~\ref{prop:limits}). The noise prior coincides with the
``diffusing diffusivity'' models of statistical physics
(Section~\ref{sec:physics}).

\paragraph{Delimiting contribution.} We then prove
(Proposition~\ref{prop:invariance}) that when the denoiser is conditioned
on $A$---the statistically natural choice, since $A$ is known at both
training and sampling time---the mixing law is a \emph{nuisance parameter}:
in the exact-denoiser limit the generated distribution does not depend on
it at all, and interventions on $A$ provably do nothing. The experiments
(Section~\ref{sec:exp}) confirm both the theorem and its boundary with
unusual sharpness: conditioned models reproduce clustering equally well
whatever the law of $A$ (even pure DDPM), a $\times8$ engineered volatility
shock leaves the generated envelope essentially unchanged, while
\emph{blind} models (denoiser not shown $A$) transmit the mechanism's
structure quantitatively---the calibrated lag-one autocorrelation appears
in the samples to within $0.01$, with the geometric decay fingerprint of
Proposition~\ref{prop:moments}, and vanishes for the i.i.d.\ variant. The
practical moral for the growing heavy-tailed-diffusion literature
\citep{heavytails2026}: \emph{the mixing law controls what the noise
offers; the conditioning choice controls whether it transmits}; claimed
benefits of heavy-tailed noise in the conditioned regime must be
inductive-bias effects, not mechanism effects---and in our measurements,
across three training seeds, they are at most marginal even for tail
fidelity and absent for dependence structure.

\paragraph{Coupling contribution (v2).} The invariance theorem names its
own escape route: break the independence of $x_0$ and $A$. We do so by
variational inference---an encoder posits each path's own latent
volatility, $q_\psi(A\mid x_0)$, trained jointly with the denoiser
(Section~\ref{sec:v2}). The exercise surfaces a second delimiting fact:
the \emph{simplified} denoising loss omits the log-determinant of the
noise covariance---exactly the term through which a Gaussian likelihood
fits a variance---and the whitened residual it retains is nearly
scale-free in $A$, leaving the encoder only weak indirect gradient
pathways; empirically these do not suffice (both failure modes are
documented in Remark~\ref{rem:logdet}). Restoring the term (equivalently,
adding the exact variational objective of the classical stochastic-%
volatility state-space model) turns the encoder into a volatility
filter and unlocks what v1 provably lacks: interventions on $A$ now steer
generation quantitatively, and the prior's memory parameter, learned by
empirical Bayes, moves in the direction of the ground-truth
persistence that moment calibration underestimates
(Section~\ref{sec:exp6}).

\section{Background}\label{sec:background}

\paragraph{DDPM.} For $x_0\in\R^d$ and a schedule
$\beta_1,\dots,\beta_T\in(0,1)$, DDPM \citep{ho2020ddpm} sets
$q(x_t\mid x_{t-1})=\N(\sqrt{1-\beta_t}\,x_{t-1},\beta_t I)$, giving
$q(x_t\mid x_0)=\N(\sqrt{\da_t}x_0,(1-\da_t)I)$ with $\alpha_t=1-\beta_t$,
$\da_t=\prod_{s\le t}\alpha_s$; training minimizes the reweighted ELBO in
the $\varepsilon$-parametrization.

\paragraph{Variance mixtures.} If $A>0$ is random and $Z\sim\N(0,1)$
independent, $\varepsilon=\sqrt{A}Z$ has
$\E e^{iu\varepsilon}=\E e^{-u^2A/2}$: one-sided $\alpha$-stable $A$ gives
symmetric $2\alpha$-stable $\varepsilon$ (DLPM); inverse-gamma $A$ gives
Student-$t$ (t-EDM); lognormal $A$ gives Clark's
\citeyearpar{clark1973} model. The mixing variable is the instantaneous
variance; its temporal dependence \emph{is} volatility clustering.

\paragraph{Tempered stable subordinators.} An increment
$\eta\sim\mathrm{TS}(\alpha,\theta,\delta)$, $0<\alpha<1$, $\theta>0$,
$\delta>0$, is the exponential tilting of the one-sided stable law:
\begin{equation}\label{eq:lt}
\E\,e^{-u\eta}
=\exp\!\bigl(-\delta[(\theta+u)^{\alpha}-\theta^{\alpha}]\bigr),
\qquad
\kappa_n(\eta)=\delta\,\alpha\,
\tfrac{\Gamma(n-\alpha)}{\Gamma(1-\alpha)}\,\theta^{\alpha-n};
\end{equation}
in particular $\kappa_1=\delta\alpha\theta^{\alpha-1}$,
$\kappa_2=\delta\alpha(1-\alpha)\theta^{\alpha-2}$. All moments are finite
for $\theta>0$; $\theta\to0$ recovers the stable law
$\E e^{-u\eta}=e^{-\delta u^\alpha}$.

\section{The model}\label{sec:model}

\subsection{Definition}

\begin{definition}[DSPM]\label{def:dspm}
Fix $0<\alpha<1$, $\theta>0$, $0\le\varphi<1$ and a DDPM schedule
$\{\beta_t\}$. The \emph{variance chain} $A=(A_1,\dots,A_d)$ is the
stationary solution of
\begin{equation}\label{eq:chain}
A_i=\varphi A_{i-1}+\eta_i,\qquad
\eta_i\overset{\mathrm{iid}}\sim\mathrm{TS}(\alpha,\theta,\delta),\qquad
\delta=\frac{(1-\varphi)\theta^{1-\alpha}}{\alpha}
\quad(\Rightarrow\ \E A_i=1),
\end{equation}
and, writing $D_A=\mathrm{diag}(A)$, the forward process conditionally on
$A$ is
\begin{equation}\label{eq:forward}
q(x_t\mid x_{t-1},A)=\N\!\bigl(\sqrt{1-\beta_t}\,x_{t-1},\,\beta_tD_A\bigr),
\qquad
q(x_t\mid x_0,A)=\N\!\bigl(\sqrt{\da_t}x_0,\,(1-\da_t)D_A\bigr).
\end{equation}
\end{definition}

The index $i$ is \emph{calendar time of the generated path}, not the
diffusion step: the terminal prior $x_T\approx D_A^{1/2}z$ is a discrete
BNS stochastic-volatility path. Recursion \eqref{eq:chain} is the exact
skeleton of the BNS variance process
$dV_s=-\lambda V_sds+dL_{\lambda s}$ with $\varphi=e^{-\lambda\Delta}$;
fast exact simulation is standard \citep{sabino2022}. Generation is a
latent-variable model: draw $A$ from \eqref{eq:chain}, then run the learned
reverse diffusion given $A$. In training, $x_0$ (data) and $A$ are drawn
\emph{independently}---a fact that Proposition~\ref{prop:invariance} will
make consequential.

\subsection{Conditional Gaussian structure}

\begin{proposition}[Posterior]\label{prop:posterior}
Conditionally on $A$, for $t\ge2$,
\begin{equation}\label{eq:post}
q(x_{t-1}\mid x_t,x_0,A)
=\N\bigl(\tilde\mu_t(x_t,x_0),\ \tilde\beta_tD_A\bigr),\qquad
\tilde\beta_t=\frac{1-\da_{t-1}}{1-\da_t}\beta_t,
\end{equation}
with the \emph{unchanged} DDPM mean
$\tilde\mu_t=\frac{\sqrt{\da_{t-1}}\beta_t}{1-\da_t}x_0
+\frac{\sqrt{\alpha_t}(1-\da_{t-1})}{1-\da_t}x_t$.
\end{proposition}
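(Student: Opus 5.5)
Conditionally on $A$, the proof reduces coordinatewise to the standard DDPM computation. The key observation is that $D_A$ is diagonal, so given $A$ the forward process \eqref{eq:forward} factorizes over coordinates $i=1,\dots,d$. Coordinate $i$ is a scalar DDPM whose per-step noise variance is scaled by the constant $A_i>0$. The plan is therefore to verify \eqref{eq:post} for a single coordinate with variance factor $a=A_i$ and then reassemble the diagonal.

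\textbf{Step 1 (marginal).} First confirm the closed-form marginal $q(x_t\mid x_0,A)=\N(\sqrt{\da_t}x_0,(1-\da_t)D_A)$ by induction on $t$. Composing the Gaussian transitions gives variance $\alpha_t(1-\da_{t-1})a+\beta_t a=(1-\da_t)a$. The Markov property of the forward chain given $A$ is built into the definition.

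\textbf{Step 2 (Bayes).} By Bayes and Markov,
\[
q(x_{t-1}\mid x_t,x_0,A)\propto q(x_t\mid x_{t-1},A)\,q(x_{t-1}\mid x_0,A).
\]
Both factors are Gaussian in $x_{t-1}$, with respective variances $\beta_t a$ and $(1-\da_{t-1})a$. Completing the square in $x_{t-1}$ gives precision
\[
\frac{1}{a}\Bigl(\frac{\alpha_t}{\beta_t}+\frac{1}{1-\da_{t-1}}\Bigr)=\frac{1-\da_t}{a\,\beta_t(1-\da_{t-1})},
\]
so the variance is $\tilde\beta_t a$. The mean equals the variance times
\[
\frac{1}{a}\Bigl(\frac{\sqrt{\alpha_t}}{\beta_t}x_t+\frac{\sqrt{\da_{t-1}}}{1-\da_{t-1}}x_0\Bigr).
\]
Here the factor $1/a$ cancels against the $a$ in the variance, leaving exactly the DDPM $\tilde\mu_t$.

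\textbf{Step 3 (reassembly).} Reassemble the coordinates into the diagonal covariance $\tilde\beta_tD_A$. Equivalently, one can do the whole computation in matrix form: every covariance is a scalar multiple of the same $D_A$, so all matrix inverses commute and $D_A$ factors out of the posterior mean.

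There is no genuine obstacle. The only point needing care is to state explicitly that the common factor $A_i$ appears in both the likelihood and prior variances and hence cancels in the mean. Note also that the dependence among the $A_i$ (the AR(1) structure) is irrelevant, since everything is conditional on $A$.
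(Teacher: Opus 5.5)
Your proof is correct and follows the paper's own argument. You factorize over coordinates because every covariance is a multiple of the same diagonal $D_A$, compute the precision $\frac{\alpha_t}{\beta_t a}+\frac{1}{(1-\da_{t-1})a}=\frac{1-\da_t}{\beta_t(1-\da_{t-1})a}$, and observe that $a$ cancels in the mean; the explicit induction check of the marginal and the written-out mean are harmless additions that the paper leaves implicit.
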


\begin{proof}
Every covariance in \eqref{eq:forward} is a multiple of the same diagonal
$D_A$, so the computation factorizes over coordinates. In coordinate $i$
with variance factor $a=A_i$: posterior precision
$\frac{\alpha_t}{\beta_ta}+\frac{1}{(1-\da_{t-1})a}
=\frac{1-\da_t}{\beta_t(1-\da_{t-1})a}$, i.e.\ variance
$\tilde\beta_ta$; in the mean, $a$ multiplies prior and likelihood
variances alike and cancels. Only conditioning on the whole vector $A$ is
used---its dependence structure across $i$ is irrelevant.
\end{proof}

\begin{proposition}[ELBO]\label{prop:elbo}
With
$p_\phi(x_{t-1}\mid x_t,A)=\N(\mu_\phi,\tilde\beta_tD_A)$,
$\mu_\phi=\frac{1}{\sqrt{\alpha_t}}\bigl(x_t
-\frac{\beta_t}{\sqrt{1-\da_t}}D_A^{1/2}\hat z_\phi(x_t,t,A)\bigr)$, the
per-step KL of the variational bound is
\begin{equation}\label{eq:kl}
\mathrm{KL}\bigl(q(x_{t-1}\mid x_t,x_0,A)\,\|\,p_\phi\bigr)
=\frac{\beta_t^2}{2\tilde\beta_t\alpha_t(1-\da_t)}
\bigl\|z-\hat z_\phi\bigr\|^2,
\end{equation}
where $z$ is the whitened noise in \eqref{eq:forward}: the
$D_A$-Mahalanobis norm collapses to the Euclidean norm of the whitened
residual, and $\E\|z-\hat z_\phi\|^2$ is the standard reweighted ELBO,
exactly as in DDPM.
\end{proposition}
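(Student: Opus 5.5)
The plan is to reduce everything to the KL between two Gaussians with the \emph{same} covariance $\tilde\beta_tD_A$, which Proposition~\ref{prop:posterior} supplies. For $\N(\mu_1,\Sigma)$ and $\N(\mu_2,\Sigma)$ the trace and log-determinant terms cancel, leaving
\[
\mathrm{KL}=\tfrac12(\mu_1-\mu_2)^\top\Sigma^{-1}(\mu_1-\mu_2)
=\frac{1}{2\tilde\beta_t}\bigl\|D_A^{-1/2}(\tilde\mu_t-\mu_\phi)\bigr\|^2 .
\]
This is the point where conditioning on $A$ (assumed a.s.\ positive, which holds since $\eta_i>0$) and the matching of the model covariance to the posterior covariance are used. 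So the task becomes computing $\tilde\mu_t-\mu_\phi$.

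Next I would rewrite the posterior mean in the $\varepsilon$-parametrization. From \eqref{eq:forward}, $x_t=\sqrt{\da_t}x_0+\sqrt{1-\da_t}\,D_A^{1/2}z$ with $z\sim\N(0,I)$ independent of $A$. Substituting $x_0=(x_t-\sqrt{1-\da_t}D_A^{1/2}z)/\sqrt{\da_t}$ into the DDPM mean of Proposition~\ref{prop:posterior} is the standard DDPM algebra: the coefficient identities $\sqrt{\da_{t-1}}/\sqrt{\da_t}=1/\sqrt{\alpha_t}$ and $\beta_t+\alpha_t(1-\da_{t-1})=1-\da_t$ hold verbatim. The only change from DDPM is that $z$ is carried along multiplied by the constant diagonal matrix $D_A^{1/2}$. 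This yields
\[
\tilde\mu_t=\frac{1}{\sqrt{\alpha_t}}\Bigl(x_t-\frac{\beta_t}{\sqrt{1-\da_t}}D_A^{1/2}z\Bigr).
\]

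Subtracting the stated $\mu_\phi$ then gives $\tilde\mu_t-\mu_\phi=-\frac{\beta_t}{\sqrt{\alpha_t}\sqrt{1-\da_t}}D_A^{1/2}(z-\hat z_\phi)$. In the Mahalanobis form, $D_A^{-1/2}D_A^{1/2}=I$, so the weighting by $A$ disappears and we obtain
\[
\frac{1}{2\tilde\beta_t}\cdot\frac{\beta_t^2}{\alpha_t(1-\da_t)}\|z-\hat z_\phi\|^2,
\]
which is \eqref{eq:kl}. The closing remark follows by taking expectations over $x_0$, $A$, $z$ and $t$ and dropping the $t$-dependent weight, exactly as in \citet{ho2020ddpm}.

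The main (mild) obstacle is the bookkeeping in the second step: checking that the diagonal, possibly non-scalar $D_A^{1/2}$ commutes through the affine substitution. It does, because every operation is coordinatewise, or equivalently because $D_A$ is a fixed matrix given $A$. The step that genuinely needs the model choice is the first one. If $p_\phi$ used a covariance other than $\tilde\beta_tD_A$, trace and log-determinant terms in $A$ would survive, and the KL would no longer collapse to a whitened Euclidean residual.
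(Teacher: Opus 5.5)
Your proposal is correct and follows the paper's proof essentially step for step. In both, the equal-covariance Gaussian KL reduces to $\|\tilde\mu_t-\mu_\phi\|^2_{D_A^{-1}}/(2\tilde\beta_t)$, the substitution $x_0=(x_t-\sqrt{1-\da_t}D_A^{1/2}z)/\sqrt{\da_t}$ gives $\tilde\mu_t-\mu_\phi=\frac{\beta_t}{\sqrt{\alpha_t(1-\da_t)}}D_A^{1/2}(\hat z_\phi-z)$, and $D_A^{1/2}$ cancels against $D_A^{-1}$; your explicit check of the coefficient identities $\sqrt{\da_{t-1}}/\sqrt{\da_t}=1/\sqrt{\alpha_t}$ and $\beta_t+\alpha_t(1-\da_{t-1})=1-\da_t$ fills in algebra the paper leaves implicit.
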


\begin{proof}
Equal covariances give
$\mathrm{KL}=\|\tilde\mu_t-\mu_\phi\|_{D_A^{-1}}^2/(2\tilde\beta_t)$;
substituting $x_0=(x_t-\sqrt{1-\da_t}D_A^{1/2}z)/\sqrt{\da_t}$ leaves
$\tilde\mu_t-\mu_\phi=\frac{\beta_t}{\sqrt{\alpha_t(1-\da_t)}}
D_A^{1/2}(\hat z_\phi-z)$, and $D_A^{1/2}$ cancels against $D_A^{-1}$.
\end{proof}

Training and sampling are therefore verbatim DDPM with $D_A^{1/2}$
scalings; we condition the denoiser on $A$ and whiten its input,
$\hat z_\phi\bigl(x_t/\sqrt{\da_t+(1-\da_t)A},\,\sqrt A,\,t\bigr)$
(coordinate-wise operations). What does \emph{not} survive is the DLPM
two-draw training acceleration, which needs additive closure of stable
laws \emph{plus} independence; with a correlated chain one simulates the
full $O(d)$ scalar recursion \eqref{eq:chain} per example---cheap in
practice ($\approx10\%$ of wall-clock per training step at our scale:
$111$s vs.\ $100$s per $6{,}000$ steps).

\subsection{Closed-form stylized facts, exact calibration}

Let $r_i=\sqrt{A_i}z_i$, $z\sim\N(0,I)\perp A$: the terminal noise, i.e.\
what the mechanism offers before any learning.

\begin{proposition}[Moment identities]\label{prop:moments}
Under Definition~\ref{def:dspm}, with
$v:=\Var(A_i)=\dfrac{1-\alpha}{\theta(1+\varphi)}$:
(i) $\Cov(A_i,A_{i+h})=v\varphi^h$;
(ii) $\E r_i=0$, $\E r_i^2=1$, $\Cov(r_i,r_{i+h})=0$, and excess kurtosis
\begin{equation}\label{eq:kurt}
K=3v=\frac{3(1-\alpha)}{\theta(1+\varphi)};
\end{equation}
(iii) the squared-noise autocorrelation is
\begin{equation}\label{eq:acf}
\rho_{r^2}(h)=\frac{v\,\varphi^h}{3v+2},\qquad h\ge1 .
\end{equation}
\end{proposition}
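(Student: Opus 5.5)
The plan is to compute everything from the stationary MA($\infty$) representation of the chain together with the cumulants \eqref{eq:lt} of the tempered-stable increments, and then to pass from $A$ to $r$ by conditioning on $A$ and using $z\perp A$.

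\textbf{Moments of the chain.} Stationarity of \eqref{eq:chain} with $0\le\varphi<1$ and i.i.d.\ square-integrable $\eta_i$ gives $A_i=\sum_{k\ge0}\varphi^k\eta_{i-k}$, which converges in $L^2$. Taking means, $\E A_i=\kappa_1/(1-\varphi)=\delta\alpha\theta^{\alpha-1}/(1-\varphi)=1$ by the choice of $\delta$. Taking variances, $v=\kappa_2/(1-\varphi^2)$. Substituting $\kappa_2=\delta\alpha(1-\alpha)\theta^{\alpha-2}$ and $\delta\alpha=(1-\varphi)\theta^{1-\alpha}$ gives $\kappa_2=(1-\varphi)(1-\alpha)/\theta$, hence $v=(1-\alpha)/(\theta(1+\varphi))$. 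For (i), write $A_{i+h}=\varphi^hA_i+\sum_{k=0}^{h-1}\varphi^k\eta_{i+h-k}$. The increments in the sum are independent of $A_i$, which is measurable with respect to $\{\eta_j:j\le i\}$. Therefore $\Cov(A_i,A_{i+h})=\varphi^hv$.

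\textbf{Moments of $r$.} Condition on $A$ and use $z\perp A$ with $z$ standard Gaussian:
\begin{itemize}
\item $\E r_i=0$ and $\E r_i^2=\E A_i=1$.
\item For $h\ge1$, $\E[r_ir_{i+h}\mid A]=\sqrt{A_iA_{i+h}}\,\E z_i\E z_{i+h}=0$, so the noise is serially uncorrelated.
\item $\E r_i^4=3\E A_i^2=3(v+1)$. With unit variance the excess kurtosis is therefore $3(v+1)-3=3v$, which is \eqref{eq:kurt}.
\end{itemize}

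\textbf{Squared-noise autocorrelation.} For (iii), $\E[r_i^2r_{i+h}^2\mid A]=A_iA_{i+h}$ because $z_i^2$ and $z_{i+h}^2$ are independent with mean $1$. Hence $\Cov(r_i^2,r_{i+h}^2)=\Cov(A_i,A_{i+h})=v\varphi^h$. Next, $\Var(r_i^2)=\E r_i^4-1=3v+2$. Dividing gives \eqref{eq:acf}.

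\textbf{Main obstacle.} The only step needing care is justifying the stationary representation: existence of the stationary solution, the $L^2$ convergence of the series, and the use of independence in (i). This follows from finiteness of all cumulants for $\theta>0$ and $|\varphi|<1$. The rest is bookkeeping, mainly the simplification of $\kappa_2$ through the normalization of $\delta$. I would also check the boundary case $\varphi=0$ separately: there the chain is i.i.d., (i) holds trivially for $h\ge1$, and the formulas remain valid.
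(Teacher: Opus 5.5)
Your proof is correct and follows the paper's own route: the MA$(\infty)$ representation with the tempered-stable cumulants for $\E A_i$, $v$ and the AR(1) autocovariance, then conditioning on $A$ (with $z\perp A$) to get the moments of $r$ and $r^2$. Your explicit decomposition $A_{i+h}=\varphi^hA_i+\sum_{k<h}\varphi^k\eta_{i+h-k}$ and the simplification $\kappa_2=(1-\varphi)(1-\alpha)/\theta$ fill in steps that the paper leaves implicit.
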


\begin{proof}
(i) $A_i=\sum_{k\ge0}\varphi^k\eta_{i-k}$ gives
$\E A=\kappa_1(\eta)/(1-\varphi)=1$ by \eqref{eq:chain},
$\Var A=\kappa_2(\eta)/(1-\varphi^2)=v$ after substituting
\eqref{eq:lt}, and the AR(1) autocovariance $v\varphi^h$.
(ii) $\E r^4=\E A^2\E z^4=3(v+1)$, hence \eqref{eq:kurt}; oddness in $z$
kills linear autocorrelation.
(iii) $\Cov(r_i^2,r_{i+h}^2)=\Cov(A_i,A_{i+h})=v\varphi^h$ and
$\Var(r_i^2)=3(v+1)-1=3v+2$.
\end{proof}

\begin{corollary}[Exact calibration; feasibility]\label{cor:calib}
For targets $K^\ast>0$, $\rho_1^\ast\in(0,1)$ and fixed $\alpha$, the map
$(\varphi,\theta)\mapsto(K,\rho_1)$ is exactly identified with analytic
inverse
\begin{equation}\label{eq:invert}
\varphi=\frac{3\rho_1^\ast(K^\ast+2)}{K^\ast},\qquad
\theta=\frac{3(1-\alpha)}{K^\ast(1+\varphi)},
\end{equation}
feasible iff $\rho_1^\ast<K^\ast/(3(K^\ast+2))$. The bound is a
falsifiable prediction of the model class ($\rho_1<1/3$ always); daily
equity indices ($K\gtrsim8$, $\rho_1\approx0.2$) satisfy it.
\end{corollary}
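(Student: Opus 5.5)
The plan is to invert the two closed-form identities of Proposition~\ref{prop:moments} directly, treating $v$ as the intermediate variable. From \eqref{eq:kurt}, $K=3v$, so a kurtosis target fixes $v=K^\ast/3$ regardless of $\varphi$ and $\theta$. Substituting this into \eqref{eq:acf} at $h=1$ gives $\rho_1=v\varphi/(3v+2)$. With $3v+2=K^\ast+2$, this becomes $\rho_1=K^\ast\varphi/(3(K^\ast+2))$, a linear equation in $\varphi$. Its unique solution is $\varphi=3\rho_1^\ast(K^\ast+2)/K^\ast$, the first half of \eqref{eq:invert}.

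With $\varphi$ fixed, the kurtosis identity $K^\ast=3(1-\alpha)/(\theta(1+\varphi))$ is strictly decreasing in $\theta$ on $(0,\infty)$ and maps onto $(0,\infty)$. So it has the unique solution $\theta=3(1-\alpha)/(K^\ast(1+\varphi))$. This yields a unique preimage, and composing the forward map with these formulas returns $(K^\ast,\rho_1^\ast)$. That establishes exact identification: two equations, two unknowns, and a bijection onto the feasible region.

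Feasibility means the recovered parameters must lie in the admissible set of Definition~\ref{def:dspm}: $0\le\varphi<1$ and $\theta>0$. Positivity of $\theta$ is automatic, since $\alpha<1$, $K^\ast>0$ and $1+\varphi>0$. The condition $\varphi\ge0$ follows from $\rho_1^\ast>0$. So the only real constraint is $\varphi<1$, i.e.\ $3\rho_1^\ast(K^\ast+2)<K^\ast$, which is the stated bound. For the ``only if'' direction, any admissible $(\varphi,\theta)$ gives $\rho_1=K\varphi/(3(K+2))<K/(3(K+2))$. The universal bound $\rho_1<1/3$ follows because $K/(K+2)<1$.

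The one point needing care, and the main potential obstacle, is the boundary case $\varphi=0$. It requires $\rho_1^\ast=0$, which is excluded by $\rho_1^\ast\in(0,1)$, so the inverse stays in the interior. I would also check that the normalization $\delta=(1-\varphi)\theta^{1-\alpha}/\alpha$ stays positive, which it does for $\varphi<1$.

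The empirical remark then needs only a numerical check. For $K=8$ the threshold is $8/30\approx0.267$, which exceeds $0.2$, and the threshold $K/(3(K+2))$ is increasing in $K$, so the bound holds for all $K\gtrsim8$.
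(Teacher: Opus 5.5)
Your proof is correct and takes the route the paper intends. The paper gives the corollary no separate proof: it reads it off by inverting the two identities $K=3v$ and $\rho_{r^2}(1)=v\varphi/(3v+2)$ of Proposition~\ref{prop:moments}, with feasibility coming from the admissibility constraint $\varphi<1$, exactly as you do. Your explicit treatment of the ``only if'' direction and of $\theta>0$ supplies details the paper leaves implicit.
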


The activity index $\alpha$ is left free by \eqref{eq:invert}: it shapes
the variance marginal beyond its first two moments (we fix $\alpha=0.7$).
For $\theta>0$ all moments of $r$ are finite---\emph{semi-heavy} tails, the
CGMY class \citep{cgmy2002}; as $\theta\downarrow0$ the noise approaches
symmetric $2\alpha$-stable, so small calibrated $\theta$ yields apparent
power-law behaviour over the observable range before tempering takes over,
matching the empirical situation \citep{mantegna1994,gopikrishnan1999}.

\begin{proposition}[Boundary cases]\label{prop:limits}
(i) $\theta\to\infty$: $v\to0$, $A\to1$ in $L^2$---DDPM.
(ii) $\varphi=0$, $\theta\to0$ at fixed $\delta$: $r_i$ i.i.d.\ symmetric
$2\alpha$-stable---the DLPM/LIM noise family.
(iii) $\varphi\to1$ with increments rescaled to hold $(\E A,v)$ fixed:
$\Cov(A_i,A_j)\to v$ for all $i,j$---one frozen scale per path,
$\rho_{r^2}(h)\to v/(3v+2)$ lag-independent: the correlation structure of
t-EDM (whose inverse-gamma marginal yields Student-$t$).
\end{proposition}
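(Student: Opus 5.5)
Each of the three limits will be read off from the formulas already proved: the Laplace transform \eqref{eq:lt} of the increments, the moving-average representation $A_i=\sum_{k\ge0}\varphi^k\eta_{i-k}$ used in the proof of Proposition~\ref{prop:moments}, and the moment identities \eqref{eq:kurt}--\eqref{eq:acf}. For convergence in law of $r$ I will use the variance-mixture identity $\E e^{iur_i}=\E e^{-u^2A_i/2}$ from Section~\ref{sec:background}. Conditionally on $A$, the noise is Gaussian with covariance $D_A$. So convergence of the joint Laplace transform of $A$ gives convergence in law of the whole noise vector.

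\emph{Case (i).} With $\E A_i=1$ fixed by the choice of $\delta$ in \eqref{eq:chain}, Proposition~\ref{prop:moments} gives $\E(A_i-1)^2=v=(1-\alpha)/(\theta(1+\varphi))\to0$ as $\theta\to\infty$. Hence $A\to\mathbf 1$ in $L^2$, coordinatewise and therefore for the finite vector. Since $r_i=\sqrt{A_i}z_i$, we get $\E(r_i-z_i)^2=\E(\sqrt{A_i}-1)^2\le\E(A_i-1)^2\to0$, using $|\sqrt a-1|\le|a-1|$. The forward kernels \eqref{eq:forward} thus converge to the DDPM kernels.

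\emph{Case (ii).} Set $\varphi=0$, so $A_i=\eta_i$ are i.i.d. Hold $\delta$ fixed rather than using the normalisation in \eqref{eq:chain}; I will remark that $\E A=1$ cannot survive this limit, since the stable law has infinite mean. By \eqref{eq:lt}, $\E e^{-s\eta}=\exp(-\delta[(\theta+s)^\alpha-\theta^\alpha])\to e^{-\delta s^\alpha}$ pointwise in $s\ge0$ as $\theta\to0$. Substituting $s=u^2/2$ gives $\E e^{iur_i}\to\exp(-\delta 2^{-\alpha}|u|^{2\alpha})$. This is the characteristic function of a symmetric $2\alpha$-stable law, and Lévy's continuity theorem finishes the case. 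Independence across $i$ is preserved because the joint characteristic function factorizes for every $\theta$.

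\emph{Case (iii)} is the main obstacle, because "rescaled increments" must be made precise. I will keep $\alpha$ fixed and let $\theta=\theta(\varphi)$ and $\delta=\delta(\varphi)$ vary. Mean $1$ requires $\delta\alpha\theta^{\alpha-1}=1-\varphi$. Variance $v$ requires $\delta\alpha(1-\alpha)\theta^{\alpha-2}=v(1-\varphi^2)$, which is solved by $\theta=(1-\alpha)/(v(1+\varphi))$ with $\delta$ then determined. Proposition~\ref{prop:moments}(i) gives $\Cov(A_i,A_j)=v\varphi^{|i-j|}\to v$. Hence $\E(A_i-A_j)^2=2v(1-\varphi^{|i-j|})\to0$, so all coordinates coalesce in $L^2$ to one common scale per path. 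Substituting into \eqref{eq:acf} gives $\rho_{r^2}(h)=v\varphi^h/(3v+2)\to v/(3v+2)$ for every $h$.

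For the comparison with t-EDM, I will note that the structure matches exactly: one shared mixing variable $A$ gives $\Cov(r_i^2,r_j^2)=\Var A$ and $\Var r_i^2=3\E A^2-1$, i.e.\ the same lag-independent formula. I will not claim that the marginal law is inverse-gamma. The limit marginal is a tempered-stable-type law with the fixed $\theta$, since $(1-\varphi)\cdot$TS sums are again TS-like. That is why the statement claims only the "correlation structure" of t-EDM, with the marginal discussed separately.
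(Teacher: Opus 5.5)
Your proposal is correct and is the paper's argument made explicit. The paper's proof just reads (i) off the variance $v$ behind \eqref{eq:kurt}, (ii) off \eqref{eq:lt} via $\E e^{iur}=\E e^{-u^2A/2}=e^{-\delta(u^2/2)^\alpha}$, and (iii) off \eqref{eq:acf} with $\varphi^h\to1$; your $L^2$ bound, L\'evy-continuity step and explicit rescaling $\theta=(1-\alpha)/(v(1+\varphi))$, $\delta=(1-\varphi)\theta^{1-\alpha}/\alpha$ supply the details it leaves implicit.

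The only loose point is your aside on the limit marginal. Each $\varphi^k\eta_{i-k}$ is TS with tempering $\theta/\varphi^k$, so the sum is not TS, and the $\varphi\to1$ limit is the self-decomposable law with $\log\E e^{-uA}=-\frac{\theta^{1-\alpha}}{\alpha}\int_0^u\frac{(\theta+x)^\alpha-\theta^\alpha}{x}\,dx$. That law is still not inverse-gamma, which is all your conclusion needs.
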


\begin{proof}
(i) is \eqref{eq:kurt}; (ii) is \eqref{eq:lt} with
$\E e^{iur}=\E e^{-u^2A/2}=e^{-\delta(u^2/2)^\alpha}$; (iii) is
\eqref{eq:acf} with $\varphi^h\to1$.
\end{proof}

A single parameter $\varphi\in[0,1]$ thus interpolates between the two
published extremes, and \eqref{eq:invert} pins it to a measurable
quantity---the decay rate of volatility clustering.

\subsection{The nuisance-invariance theorem}\label{sec:invariance}

The results above describe what the noise \emph{offers}. The next result
determines when the generated distribution can \emph{accept} the offer.
Recall that in training $x_0\sim q$ and $A$ are independent.

\begin{proposition}[Nuisance invariance]\label{prop:invariance}
Suppose the reverse chain uses the exact conditional kernels: it is
initialized at $q(x_T\mid A)$ and transitions by
$q(x_{t-1}\mid x_t,A)=\int q(x_{t-1}\mid x_t,x_0,A)\,
q(dx_0\mid x_t,A)$. Then for almost every $A$ the chain terminates at
\begin{equation}
p(x_0\mid A)\;=\;q(x_0\mid A)\;=\;q(x_0).
\end{equation}
Consequently, in the exact-denoiser limit: (i) interventions on $A$
(scenario designs, distributional changes at sampling time within the
support) leave the generated law unchanged; (ii) the law of $A$ does not
affect the generated distribution at all. The mixing law acts only through
what the limit does not cover: finite-$T$ prior mismatch, the
approximation and estimation error of the learned denoiser, and the
optimization path.
\end{proposition}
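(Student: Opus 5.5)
The argument works conditionally on a fixed $A$ and has two ingredients. The first is an exact backward (time-reversal) identity for the joint forward law. The second is the training independence $x_0\perp A$, which turns $q(x_0\mid A)$ into $q(x_0)$.

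We fix $A$ with all $A_i>0$; the set $\{A_i>0\ \forall i\}$ has full probability, which is the source of the ``almost every'' qualifier. Given $A$, the forward process \eqref{eq:forward} is a Markov chain $x_0\to x_1\to\cdots\to x_T$ with Gaussian kernels. The joint conditional law therefore factorizes as
\[
q(x_{0:T}\mid A)=q(x_0\mid A)\prod_{t\ge1}q(x_t\mid x_{t-1},A).
\]
The reverse of a Markov chain is Markov. Hence
\[
q(x_{0:T}\mid A)=q(x_T\mid A)\prod_{t\ge1}q(x_{t-1}\mid x_t,A),
\]
where $q(x_{t-1}\mid x_t,A)$ is the conditional of the joint. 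This is precisely the mixture kernel in the statement. To see it, write $q(x_{t-1}\mid x_t,A)=\int q(x_{t-1}\mid x_t,x_0,A)\,q(dx_0\mid x_t,A)$. This holds because $x_{t-1}$ depends on $x_{t+1:T}$ only through $x_t$, given $x_0$ and $A$. For $t\ge2$ the inner kernel is the Gaussian of Proposition~\ref{prop:posterior}; for $t=1$ it is degenerate.

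Next we integrate the reversed factorization over $x_{1:T}$. The reverse chain that starts at $q(x_T\mid A)$ and uses these kernels has $x_0$-marginal equal to $q(x_0\mid A)$. This is proved by downward induction on $t$, showing that the law at step $t$ equals $q(x_t\mid A)$. The inductive step is the identity
\[
\int q(x_{t-1}\mid x_t,A)\,q(dx_t\mid A)=q(x_{t-1}\mid A),
\]
which is just the law of total probability. Since $x_0\sim q$ and $A$ are drawn independently in training, $q(x_0\mid A)=q(x_0)$ for a.e.\ $A$. This gives $p(x_0\mid A)=q(x_0)$.

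The consequences then follow directly. For (i), any $A'$ imposed at sampling time, within the support, yields the law $q(x_0)$, which does not depend on $A'$. For (ii), the marginal generated law is $\int p(x_0\mid A)\,\pi(dA)=q(x_0)$ for any mixing law $\pi$. The sentence about what the limit does not cover is interpretive. It only needs the observation that every step above used exactness of the initialization and of the kernels, so deviations in those enter as the only channels through which $\pi$ can act.

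The main obstacle is the measure-theoretic bookkeeping. The point is to make ``almost every $A$'' precise: the conditional kernels $q(dx_0\mid x_t,A)$ are regular conditional distributions, defined only up to null sets of $A$. I would handle this by working with the explicit disintegration. For every fixed $A$ with positive entries, the Gaussian densities are explicit, and Bayes' formula gives a version of $q(dx_0\mid x_t,A)\propto q(x_t\mid x_0,A)\,q(dx_0)$ that is defined for every such $A$. With this version, the induction holds for every $A$ in the support with positive entries, not merely almost every one. This also justifies applying the result to interventions $A'$ chosen deterministically rather than sampled.
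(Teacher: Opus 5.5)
Your proposal is correct and follows the same route as the paper's proof: reverse the conditional Markov chain given $A$, show that the exact reverse kernels started at $q(x_T\mid A)$ reproduce $q(x_0\mid A)$ (your downward induction by total probability spells this out), and then use $x_0\perp A$. Your extra step of fixing the explicit Bayes version $q(dx_0\mid x_t,A)\propto q(x_t\mid x_0,A)\,q(dx_0)$ sharpens the paper's argument, because it makes the conclusion hold for every $A$ with positive entries, so it legitimately covers deterministic interventions such as the designed shock path, which are null events under the chain law. One small correction: the mixture identity for $q(x_{t-1}\mid x_t,A)$ is just the tower property; the Markov property is needed only for the reversed factorization.
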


\begin{proof}
Conditionally on $A$, $(x_0,\dots,x_T)$ is a Markov chain with a
well-defined time reversal; composing the exact reverse kernels from the
exact terminal marginal reproduces the joint law of the forward chain given
$A$, hence the marginal $q(x_0\mid A)$. Independence of $x_0$ and $A$
gives $q(x_0\mid A)=q(x_0)$. (i) and (ii) follow because the terminal law
does not depend on the conditioning value or on the law of $A$.
\end{proof}

\begin{remark}
The proposition formalizes, and sharpens to an exact statement, the recent
skepticism about heavy-tailed noise in diffusion models
\citep{heavytails2026}: in the conditioned regime there is \emph{no
mechanism route} by which the mixing law can shape the learned
distribution; any measured benefit is an inductive-bias effect
(preconditioning of the loss, tail coverage of the terminal prior at
finite $T$). Conversely, a model whose denoiser is \emph{not} shown $A$
cannot whiten it away, and the noise structure transmits. Both predictions
are tested in Section~\ref{sec:exp}.
\end{remark}

\subsection{Coupling the chain to the data (v2)}\label{sec:v2}

Proposition~\ref{prop:invariance} rests on $x_0\perp A$. The coupled model
replaces the independent draw of $A$ by amortized inference. An encoder
maps $x_0$ to a Gaussian mean-field posterior over a \emph{coarse}
log-volatility chain $\mu^c\in\R^{d_c}$ ($d_c=16$ for $d=64$), which is
linearly interpolated to full resolution and exponentiated,
$A=\exp(\mathrm{up}(\mu^c))$. The coarse bottleneck is not a convenience:
it structurally prevents the latent from copying per-coordinate detail of
$x_0$ (sign and fine amplitude), reserving it for what it is meant to
carry---a smooth volatility profile. The prior on $\mu^c$ is the
stationary Gaussian AR(1) with memory $a$ and stationary variance $s^2$
(log-normal OU volatility, i.e.\ \citealp{clark1973}), mean fixed to
$-s^2/2$ so $\E A=1$; $(a,s^2)$ are learned (empirical Bayes), initialized
at the closed-form calibration \eqref{eq:invert} translated to the
log-normal family: under the log-normal chain,
$\Var(A)=e^{s^2}-1=K/3$ and
$\rho_{r^2}(1)=(e^{s^2a_{\mathrm{f}}}-1)/(K+2)$, giving
$s^2=\log(1+K/3)$, fine-scale memory
$a_{\mathrm{f}}=\log\!\bigl(1+\rho_1(K+2)\bigr)/s^2$, and coarse memory
$a=a_{\mathrm{f}}^{\,d/d_c}$. We flag the substitution plainly: \emph{v2
as implemented runs on Clark's log-normal clock, not on the
tempered-stable chain of Definition~\ref{def:dspm}}---the choice buys the
closed-form Gaussian KL in the objective below, and it means the
tempered-stable machinery of
Sections~3.1--3.3 enters v2 only through the calibration logic and the
initialization; carrying the TS prior itself into v2 requires Monte-Carlo
KL and is future work. The constructive contribution (TS family,
calibration, limits) and the coupling contribution therefore currently
live on two adjacent members of the same variance-mixture class, not on
one model. The objective is
\begin{equation}\label{eq:v2loss}
\mathcal{L} \;=\;
\underbrace{\E\,\bigl\|z-\hat z_\phi\bigr\|^2}_{\text{denoising}}
\;+\;
\underbrace{\E_{q_\psi}\Bigl[\tfrac1{2d}\textstyle\sum_i
\bigl(x_{0,i}^2/A_i+\log A_i\bigr)\Bigr]
+\tfrac1d\,\mathrm{KL}\bigl(q_\psi(\mu^c\mid x_0)\,\big\|\,
p_{a,s}(\mu^c)\bigr)}_{\text{SV-ELBO}},
\end{equation}
the second group being the exact variational objective of the classical
stochastic-volatility state-space model $x_{0,i}=\sqrt{A_i}\,
\varepsilon_i$ with closed-form Gaussian KL.

\begin{remark}[Why the simplified loss cannot train the encoder]
\label{rem:logdet}
The $\varepsilon$-parametrized denoising term in \eqref{eq:v2loss} is the
Euclidean norm of the \emph{whitened} residual
(Proposition~\ref{prop:elbo}): whitening cancels the noise covariance, so
the term is nearly scale-free in $A$ and carries almost no gradient toward
matching $A$ to the path's realized variance. In the exact ELBO that
information lives in the log-determinant $\tfrac12\sum_i\log A_i$ of the
Gaussian likelihood---the term every simplified diffusion loss drops. We
verified both failure modes this analysis predicts: training
\eqref{eq:v2loss} without the SV term at KL weight $\beta=1$ collapses the
posterior ($\mathrm{KL}\to0.007$ nats per latent; interventions remain
ineffective), while $\beta=0.1$ makes the latent copy $x_0$
(reconstruction MSE drops to $0.02$; unconditional generation collapses to
excess kurtosis $0.18$). With the SV term, no $\beta$-tuning is needed.
\end{remark}

Generation is unchanged (sample the coarse prior chain, interpolate, run
the reverse diffusion given $A$); but because training pairs are now
coupled, $p_\theta(x_0\mid A)$ genuinely depends on $A$, and designed
variance paths become control inputs rather than nuisance.

\section{Experiments}\label{sec:exp}

Setup common to all runs: paths of length $d=64$; $T=100$, linear
$\beta\in[10^{-3},0.2]$; denoiser $=$ six dilated \texttt{Conv1d} residual
blocks with FiLM time conditioning ($\approx0.5$M parameters,
receptive field $69\ge d$); $6{,}000$ Adam steps, batch $256$;
Apple-Silicon laptop, minutes per model. Every headline comparison
(Tables~\ref{tab:results} and~\ref{tab:v2}) reports mean $\pm$ sd over
\emph{three training seeds} with the dataset held fixed; the sampler and
identity checks (Sections~\ref{sec:exp1}--\ref{sec:exp2}) have their own
replication protocols, and the blind and receptive-field ablations are
single-run (flagged where cited). Ground truth for
Sections~\ref{sec:exp3}--\ref{sec:exp5}: $28{,}000$ paths
($20{,}000$ train, $8{,}000$ test) of GARCH(1,1)-$t$
($\omega=0.05,a=0.10,b=0.85,\nu=6$), with pooled test statistics: excess
kurtosis $6.31$, Hill index $3.69$ (top $2.5\%$ of $|r|$),
$\rho_{r^2}(1)=0.159$, $\rho_{r^2}(5)=0.110$. Training-set estimates
$(\hat K,\hat\rho_1)=(6.27,0.161)$ calibrate the DSPM by
\eqref{eq:invert}: $\varphi=0.638$, $\theta=0.0876$; the i.i.d.-TS variant
is kurtosis-matched ($\varphi=0$, $\theta=0.144$), the shared-scale
variant is a kurtosis-matched inverse-gamma (equivalent Student-$t$ with
$\nu\approx4.96$).

\subsection{Sampler}\label{sec:exp1}

We sample $\mathrm{TS}(\alpha,\theta,\delta)$ by Kanter's one-sided stable
representation \citep{kanter1975,devroye1986} with exponential-tilting
rejection, splitting into $n=\lceil\delta\theta^\alpha\rceil$ pieces to
keep acceptance $\ge e^{-1}$ \citep{baeumer2010}. With $2\times10^6$
draws, the empirical Laplace transform matches \eqref{eq:lt} to
$\le2.1\times10^{-4}$ uniformly on $u\in[0.05,4]$ across untempered
($\alpha\in\{0.5,0.7\}$) and tempered cases including the splitting regime
$\delta\theta^\alpha\approx4.9$ ($n=5$); sample mean and variance match
$\kappa_1,\kappa_2$ to three digits (Figure~\ref{fig:sampler}).

\begin{figure}[t]
\centering\includegraphics[width=.92\textwidth]{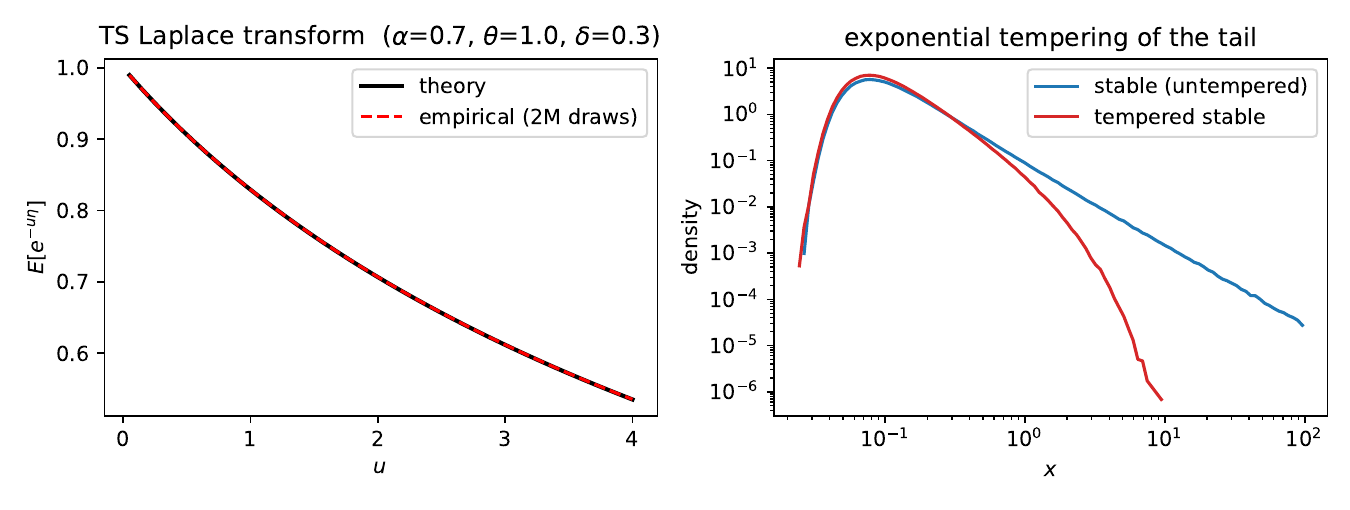}
\caption{Sampler validation. Left: empirical vs.\ exact Laplace transform,
$\mathrm{TS}(0.7,1.0,0.3)$. Right: tempering suppresses the far tail while
preserving the stable body.}
\label{fig:sampler}
\end{figure}

\subsection{Identities and calibration recovery}\label{sec:exp2}

Table~\ref{tab:identities} verifies Proposition~\ref{prop:moments} on a
$(\theta,\varphi)$ grid; Figure~\ref{fig:calib} shows the exact ACF curve,
the recovery experiment, and one noise path. At targets
$(K^\ast,\rho_1^\ast)=(8,0.20)$, \eqref{eq:invert} gives
$(\varphi^\ast,\theta^\ast)=(0.750,0.0643)$; re-estimating from finite
panels ($2{,}000$ paths, $d=64$) and re-inverting yields
$\hat\varphi=0.756\pm0.085$, $\hat\theta=0.066\pm0.007$ over $200$
replications: unbiased, exactly identified.

\begin{table}[t]
\centering
\caption{Proposition~\ref{prop:moments} vs.\ Monte Carlo ($8{,}000$ paths,
$d=64$, $\alpha=0.7$, pooled estimators).}
\label{tab:identities}
\begin{tabular}{cc cc cc}
\toprule
$\theta$ & $\varphi$ & $K$ th. & $K$ MC & $\rho_1$ th. & $\rho_1$ MC\\
\midrule
0.06 & 0.75 & 8.57 & 8.14 & 0.203 & 0.202\\
0.15 & 0.50 & 4.00 & 4.38 & 0.111 & 0.110\\
0.40 & 0.90 & 1.18 & 1.17 & 0.112 & 0.113\\
0.80 & 0.30 & 0.87 & 0.84 & 0.030 & 0.034\\
\bottomrule
\end{tabular}
\end{table}

\begin{remark}[Estimator bias under heavy tails]\label{rem:mikosch}
Averaging per-path short-window ACF estimates underestimates
$\rho_{r^2}$ by up to a factor $3$ at these parameters---the downward bias
of sample autocorrelations of heavy-tailed squares \citep{mikosch2000}.
All ACFs here use the pooled (global-mean, global-variance) estimator,
applied identically to data and models; with it, Monte Carlo matches
\eqref{eq:acf} to the third digit. Calibrations of \eqref{eq:invert} to
per-asset estimates should pool or bias-correct first.
\end{remark}

\begin{figure}[t]
\centering\includegraphics[width=\textwidth]{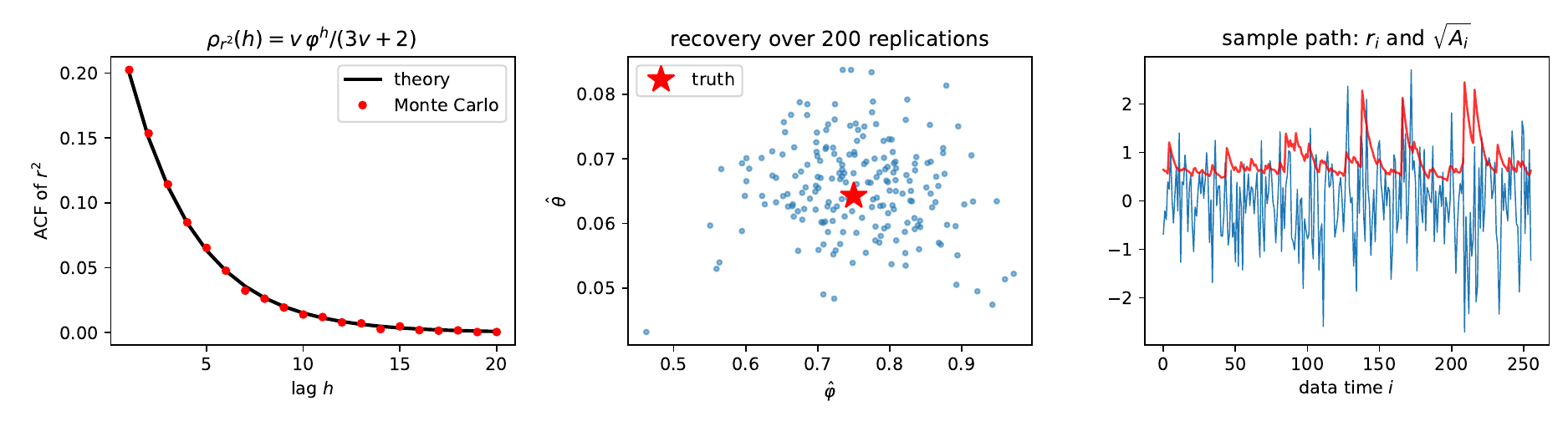}
\caption{Left: exact ACF \eqref{eq:acf} vs.\ Monte Carlo. Middle:
recovered $(\hat\varphi,\hat\theta)$ over $200$ replications (star $=$
truth). Right: noise path $r_i=\sqrt{A_i}z_i$ with latent
$\sqrt{A_i}$---clustering before any learning.}
\label{fig:calib}
\end{figure}

\subsection{Conditioned models: the invariance in action}\label{sec:exp3}

Four models differing \emph{only} in the law of $A$
(Table~\ref{tab:results}, Figure~\ref{fig:results}); all denoisers receive
$\sqrt A$ and whitened inputs; all metrics are pooled over $4{,}000$
generated paths, reported as mean $\pm$ sd over three training seeds.
Two findings. \textbf{(a) Dependence structure:}
all four seed-averaged models land within $0.017$ of the data's
$\rho_{r^2}(1)=0.159$ and within $0.011$ of its extremogram, regardless
of whether their noise offers no clustering ($A\equiv1$, i.i.d.),
constant clustering (shared-IG) or calibrated clustering (DSPM); the
between-model spread is comparable to the between-seed spread. With a
receptive-field-restricted denoiser (dilations $(1,1)$, RF $=13\ll64$;
single run) the picture is unchanged: small-RF DDPM reaches
$\rho_1=0.131$, small-RF DSPM $0.139$. (We note the restriction is
suggestive rather than a strict information bound: composing $T$ reverse
steps extends the effective receptive field well beyond a single pass, so
a small-RF sampler can still build long-range dependence iteratively.)
Clustering in the conditioned regime is \emph{learned}, not
transmitted---consistent with Proposition~\ref{prop:invariance}. The
decay shape confirms the source: conditioned-DSPM
$\rho_{r^2}(5)=0.118\pm0.012$ tracks the data's $0.110$ (learned), not
the mechanism's theoretical $0.027$ (see Section~\ref{sec:exp5}).
\textbf{(b) Marginals:} here seed variance forces us to retract the
sharper single-seed reading. Our first run showed DDPM clearly worst on
every tail metric; across three seeds the kurtosis differences dissolve
(DDPM $4.59\pm0.53$ against heavy-tailed variants' $4.3$--$5.2$, all
overlapping) and $W_1$ retains only a directional, within-noise ordering
(iid-TS $0.031\pm0.018$ and DSPM $0.038\pm0.013$ vs.\ DDPM
$0.055\pm0.028$). The one tail statistic that is simultaneously accurate
and stable is DSPM's Hill index, $3.70\pm0.04$ against the data's
$3.69$ (DDPM: $3.85\pm0.16$). At this scale, then, the inductive-bias
channel that Proposition~\ref{prop:invariance} leaves open is
\emph{at most marginal} in our measurements---the invariance is, if
anything, more complete than the theorem strictly requires.

\begin{table}[t]
\centering
\caption{Conditioned four-way comparison: mean $\pm$ sd over three
training seeds ($4{,}000$ generated paths per run; pooled estimators;
extremogram $=P(|r_{i+h}|>u\mid|r_i|>u)$ at the $95\%$ quantile $u$).}
\label{tab:results}
\begin{tabular}{l cccccc}
\toprule
 & ex.kurt & Hill & $\rho_{r^2}(1)$ & $W_1$ & xgram$_1$ & xgram$_5$\\
\midrule
data (test) & 6.31 & 3.69 & 0.159 & -- & 0.149 & 0.128\\
DDPM ($A\equiv1$) & $4.59{\pm}.53$ & $3.85{\pm}.16$ & $.142{\pm}.012$
 & $.055{\pm}.028$ & $.138{\pm}.010$ & $.122{\pm}.008$\\
iid-TS (DLPM-like) & $4.75{\pm}.11$ & $3.82{\pm}.03$ & $.149{\pm}.001$
 & $.031{\pm}.018$ & $.142{\pm}.002$ & $.123{\pm}.005$\\
shared-IG (t-EDM-like) & $4.33{\pm}.44$ & $3.88{\pm}.09$ & $.142{\pm}.004$
 & $.049{\pm}.023$ & $.142{\pm}.001$ & $.124{\pm}.006$\\
DSPM (TS-OU chain) & $5.23{\pm}.71$ & $3.70{\pm}.04$ & $.154{\pm}.011$
 & $.038{\pm}.013$ & $.146{\pm}.005$ & $.129{\pm}.005$\\
\bottomrule
\end{tabular}
\end{table}

\begin{figure}[t]
\centering\includegraphics[width=\textwidth]{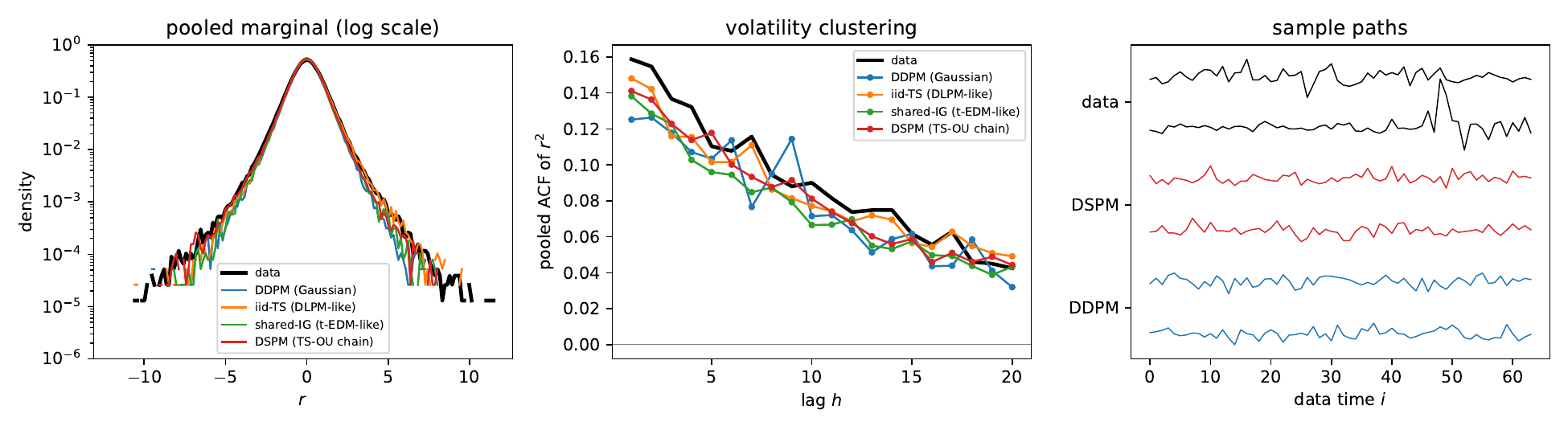}
\caption{Conditioned comparison. Left: pooled marginals (log scale).
Middle: pooled ACF of squared returns. Right: sample paths.}
\label{fig:results}
\end{figure}

\subsection{Interventions on $A$ do nothing}\label{sec:exp4}

Proposition~\ref{prop:invariance}(i) predicts that \emph{designing} the
variance chain at sampling time cannot steer a well-trained conditioned
model. We feed the trained DSPM denoiser three scenarios: the
unconditional chain; a volatility shock $A_{i}=1+7\varphi^{\,i-i_0}$ for
$i\ge i_0=24$ (peak $\times8$, the chain's own conditional-mean decay);
and a flat calm regime $A\equiv0.25$. Naive $\sqrt A$-scaling predicts
envelope changes of $+183\%$ at the peak and $-50\%$ in the calm regime.
Measured (Figure~\ref{fig:invariance}, one run shown; ratios averaged
over three training seeds): the envelope ratio at the shock peak is
$0.88\pm0.03$ against the naive $2.83$---a small dip with the wrong sign
for the naive prediction---and the calm-regime ratio is $1.06\pm0.01$
against the naive $0.50$. The trained network cancels $A$ almost exactly,
confirming that in v1 the latent is nuisance machinery, not a control
variable; scenario control requires coupling $A$ to the data
(Section~\ref{sec:v2}).

\begin{figure}[t]
\centering\includegraphics[width=.95\textwidth]{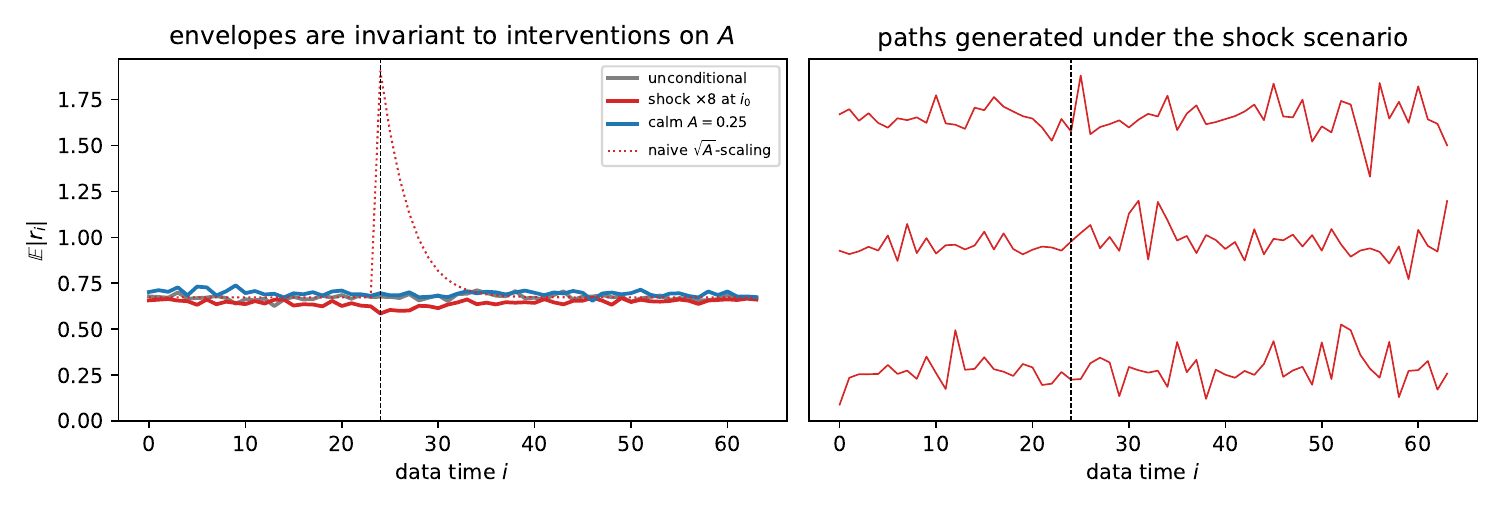}
\caption{Invariance under interventions. Left: amplitude envelopes under
the three designed $A$ scenarios coincide, far from the naive
$\sqrt A$-scaling (dotted). Right: paths generated under the $\times8$
shock scenario show no burst at $i_0$ (dashed line).}
\label{fig:invariance}
\end{figure}

\subsection{Blind models: the mechanism transmits, as calibrated}
\label{sec:exp5}

Finally we flip the single design bit the theorem identifies: the denoiser
is \emph{blind}---raw $x_t$ input, no $\sqrt A$ channel---while the
sampler machinery is unchanged. (This is an ablation isolating the
conditioning bit, not a faithful DLPM re-implementation: pairing a
marginal denoiser with $A$-scaled updates is deliberately crude, and
inflates marginals. Single training run; the effect sizes here are an
order of magnitude above the seed variability of
Table~\ref{tab:results}.) Table~\ref{tab:blind}: the blind DSPM's samples show
$\rho_{r^2}(1)=0.151$ against the \emph{calibrated} $0.161$, with
$\rho_{r^2}(5)=0.019$ against the mechanism's theoretical
$v\varphi^5/(3v+2)=0.027$---the geometric fingerprint of
\eqref{eq:acf}, clearly distinct from the data's $0.110$ and from the
conditioned model's learned $0.118$. The blind iid variant's clustering
collapses to $0.008$ and its extremogram to the $5\%$ base rate: an
i.i.d.\ mechanism has nothing to transmit. Meanwhile both blind models'
kurtosis explodes ($\approx70$): without whitening, the heavy injected
noise cannot be cleanly removed. Together with
Sections~\ref{sec:exp3}--\ref{sec:exp4} this completes the causal picture:
\emph{the law of $A$ determines what structure the noise offers
(Proposition~\ref{prop:moments}, quantitatively visible in transmission);
the conditioning choice determines whether it transmits
(Proposition~\ref{prop:invariance}).}

\begin{table}[t]
\centering
\caption{Conditioning ablation. ``Offered'' $=$ the noise mechanism's own
values from Proposition~\ref{prop:moments} at the calibrated parameters.}
\label{tab:blind}
\begin{tabular}{l ccccc}
\toprule
 & ex.kurt & Hill & $\rho_{r^2}(1)$ & $\rho_{r^2}(5)$ & xgram$_1$\\
\midrule
data (test)              & 6.31  & 3.69 & 0.159 & 0.110 & 0.149\\
offered by DSPM noise    & 6.27  & --   & 0.161 & 0.027 & --\\
\midrule
conditioned DSPM         & 4.26  & 3.76 & 0.141 & 0.118 & 0.140\\
blind DSPM               & 71.7  & 1.80 & 0.151 & 0.019 & 0.346\\
blind iid-TS             & 68.9  & 1.68 & 0.008 & 0.003 & 0.066\\
\bottomrule
\end{tabular}
\end{table}

\subsection{The coupled model: control restored, filter validated}
\label{sec:exp6}

We train v2 (Section~\ref{sec:v2}) with the same denoiser, schedule,
budget and data as v1, and re-run the intervention of
Section~\ref{sec:exp4} \emph{with the identical designed variance paths}
on both models (Figure~\ref{fig:v2}, Table~\ref{tab:v2}; all ratios and
metrics mean $\pm$ sd over three training seeds). Three results.

\textbf{(a) Control.} The $\times8$ shock, which moves the v1 envelope by
a factor $0.88\pm0.03$ (i.e.\ nothing, with the wrong sign), scales the
v2 envelope at the shock point by $\mathbf{3.00\pm0.14}$---bracketing the
naive $\sqrt A$ prediction of $2.83$---and the response decays along the
designed path ($1.98\to1.51\to1.07\to0.90$ over four steps in the run
shown). The calm scenario $A\equiv0.25$ scales the envelope to
$0.52\pm0.01$ times baseline (naive: $0.50$; v1: $1.06\pm0.01$). The same
experiment, the same intervention, opposite outcome, stable across
seeds: the difference is one premise of
Proposition~\ref{prop:invariance}.

\textbf{(b) Filtering.} On held-out paths the encoder's posterior mean of
$A$ attains a mean \emph{per-path} correlation of $0.567\pm0.001$ with
the true GARCH variance---known exactly because the data is
synthetic---and $0.762\pm0.001$
pooled. The per-path figure is the honest measure of within-path
tracking; the pooled figure is inflated by cross-path differences in
overall volatility level, which are easier to infer than dynamics. A
$0.57$ within-path correlation makes the encoder a serviceable but
unremarkable volatility smoother; its value here is that it is validated
against ground truth rather than a proxy, and that it exists at all where
the simplified objective produced none.

\textbf{(c) Memory, learned rather than calibrated.} Empirical Bayes moves
the prior's memory from its moment-calibrated initialization ($0.638$ at
fine scale) to $0.803\pm0.000$ (identical to three decimals across
seeds), in the direction of the ground-truth persistence
$a+b=0.95$. We state the comparison's limits: the three numbers live in
three related but distinct parametrizations---the TS chain's $\varphi$,
the coarse log-normal $a$ re-expressed at fine scale through the
interpolation, and GARCH's $a+b$---so the claim is directional, not
metric. The direction itself is the classical errors-in-variables
prediction: calibrating persistence to the autocorrelation of
\emph{squared returns}, a noisy proxy of latent variance, attenuates it,
and likelihood-based inference partially undoes the attenuation.

The measured cost closes the ledger, and it is not small: v2's
unconditional excess kurtosis of $2.86\pm0.12$ falls short not only of
the target $6.31$ and of v1's $5.23\pm0.71$, but of the \emph{Gaussian}
baseline's $4.59\pm0.53$---a gap far outside seed noise; its
$\rho_{r^2}(1)=0.114\pm0.003$ is likewise below all four conditioned
models ($0.142$--$0.154$). The paper's opening argument was that Gaussian noise discards the
domain's structure---and the one model here that achieves control
currently reproduces that structure worst, because the coarse bottleneck
and the learned log-normal prior smooth away variance extremes. The
honest summary is that v1 delivers fidelity without control and v2
delivers control without fidelity; obtaining both in one model is the
open engineering problem (finer latent with a residual fine-scale
variance component; tempered-stable prior via Monte-Carlo KL), not an
achievement of this paper.

\begin{table}[t]
\centering
\caption{v1 vs.\ v2 under identical interventions and identical
unconditional metrics. ``Shock ratio'' $=$ envelope at the shock peak over
baseline (naive $\sqrt A$ scaling: $2.83$); ``calm ratio'' analogous
(naive: $0.50$). The three memory-column entries are expressed in
related but distinct parametrizations (TS-chain $\varphi$; coarse
log-normal $a$ at fine scale; GARCH $a+b$)---see the caveat in
Section~\ref{sec:exp6}(c).}
\label{tab:v2}
\begin{tabular}{l cc c c ccc}
\toprule
 & shock & calm & filter & fine-scale & \multicolumn{3}{c}{unconditional}\\
 & ratio & ratio & corr & memory $a$ & ex.kurt & $\rho_{r^2}(1)$ & $W_1$\\
\midrule
target / naive & 2.83 & 0.50 & 1 & 0.95 & 6.31 & 0.159 & --\\
v1 (uncoupled) & $.88{\pm}.03$ & $1.06{\pm}.01$ & -- & 0.638 (calib.)
 & $5.23{\pm}.71$ & $.154{\pm}.011$ & $.038{\pm}.013$\\
v2 (coupled) & $3.00{\pm}.14$ & $.52{\pm}.01$ & $.57{\pm}.00$
 & 0.803 (learned) & $2.86{\pm}.12$ & $.114{\pm}.003$ & $.071{\pm}.006$\\
\bottomrule
\end{tabular}
\end{table}

\begin{figure}[t]
\centering\includegraphics[width=\textwidth]{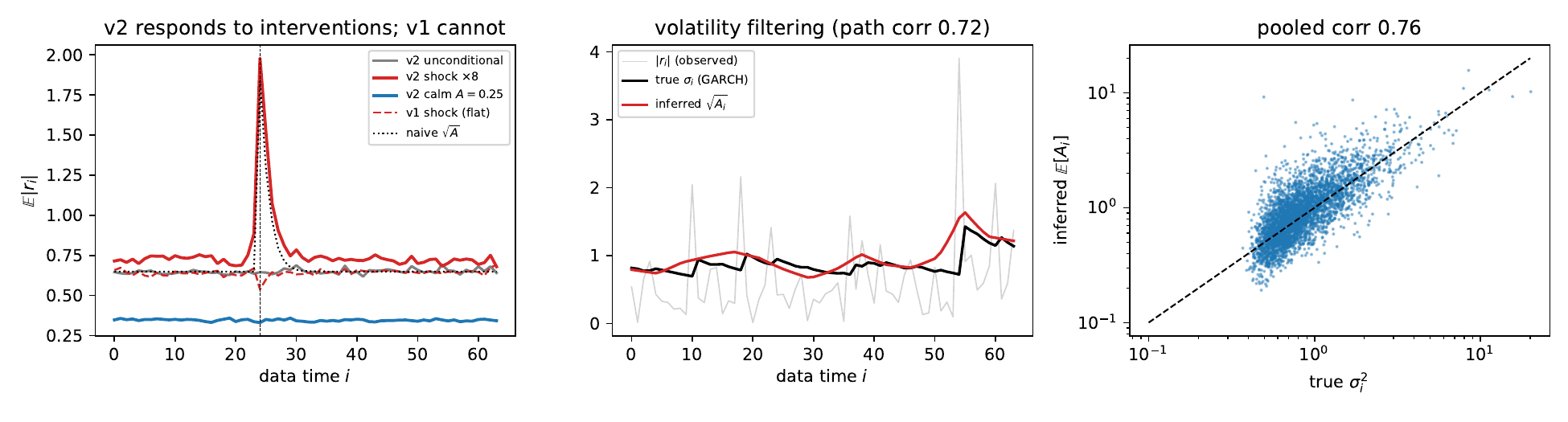}
\caption{v2. Left: under identical designed interventions, the v2 envelope
responds along the designed decay while v1 stays flat. Middle: inferred
$\sqrt{A_i}$ vs.\ the true GARCH $\sigma_i$ on a held-out path. Right:
pooled inferred variance vs.\ truth.}
\label{fig:v2}
\end{figure}

\section{Physics correspondence}\label{sec:physics}

The DSPM noise prior is a known object in statistical physics, and the
correspondences are exact, not analogies. \textbf{Superstatistics}
\citep{beck2003}: $r=\sqrt AZ$ with slowly varying $A$ is a
superstatistical ensemble; $\chi^2$-distributed intensity gives
Student-$t$ (t-EDM), our chain gives a self-decomposable tempered mixing
law. \textbf{Diffusing diffusivity}: equations
\eqref{eq:chain}--\eqref{eq:forward} are a discrete diffusing-diffusivity
model \citep{chubynsky2014}---a Brownian particle whose diffusion
coefficient is itself a stationary process---the accepted mechanism for
``Brownian yet non-Gaussian'' transport; the subordination analysis of
\citet{chechkin2017} transfers verbatim and predicts re-Gaussianization of
aggregated noise over horizons $\gg(1-\varphi)^{-1}$, i.e.\ stylized fact
(iv). \textbf{Subordination}: the chain is a random clock rate in the
sense of \citet{clark1973} and of continuous-time random walks
\citep{montroll1965}; DSPM is a neural generative model driven by a
discretized time-changed Brownian motion, the minimal adequate description
of returns according to \citet{carr2003sv}. These statements constrain the
\emph{noise mechanism} only; by Proposition~\ref{prop:invariance} they
imply nothing about the learned model in the conditioned regime---a
caution the physics framing makes easy to forget.

\section{Related work}\label{sec:related}

Heavy-tailed diffusion: DLPM \citep{shariatian2024dlpm}, LIM
\citep{yoon2023lim} (i.i.d.\ stable mixing; no memory); t-EDM
\citep{pandey2025tedm} (shared scale; infinite memory); fractional
generative diffusion \citep{nobis2023} (Gaussian long memory);
jump-diffusion forward processes \citep{baule2026}; GBM-based financial
diffusion \citep{kim2025gbm}, where heteroscedasticity is a function of
the observed price level rather than a latent process with calibratable
memory. None places a dynamically correlated latent variance inside the
noise, and none states the invariance of Section~\ref{sec:invariance}. We
are careful about that result's blast radius: it constrains mechanism
claims only in the \emph{conditioned} regime (denoiser shown the mixing
variables, drawn independently of the data). Models that operate blind
with respect to their mixing variables---as published DLPM effectively
does---fall on the transmission side of the dichotomy, where the
mechanism does act; for them the relevant finding here is different,
namely that an i.i.d.\ mixing structure has no temporal dependence to
transmit.
On the classical side DSPM is the neural counterpart of BNS
\citep{bns2001} and time-changed L\'evy models \citep{carr2003sv};
simulation follows \citet{kanter1975,baeumer2010,sabino2022}.

\section{Limitations and outlook}\label{sec:limits}

\textbf{(1) Single-factor memory.} One OU factor forces geometric ACF
decay; the calibrated $\varphi=0.638$ cannot match the GARCH truth's
persistence $a+b=0.95$ at long lags (visible in
Table~\ref{tab:blind}: offered $\rho_{r^2}(5)=0.027$ vs.\ data $0.110$).
Superpositions of OU factors (supOU) fix this within the same
propositions; consistently, v2's \emph{learned} memory ($0.803$) moved
toward but not onto the truth. \textbf{(2) The v2 trade-off is not yet
optimized.} The coarse bottleneck ($d_c=16$) that blocks the copy-cheat
also smooths away variance extremes, costing unconditional kurtosis
($2.69$ vs.\ v1's $4.26$); and the log-normal prior was chosen for its
closed-form KL, not its tails. A finer latent with a residual fine-scale
variance component, and the tempered-stable prior of
Definition~\ref{def:dspm} with Monte-Carlo KL, are the natural v2.1.
\textbf{(3) The blind sampler is crude.} The blind rows of
Table~\ref{tab:blind} isolate the conditioning bit; they are not a fair
evaluation of DLPM, whose reverse process is derived for the marginal
noise. \textbf{(4) Synthetic ground truth.} GARCH-$t$ was chosen so the
target stylized facts are known and controllable; the economically
decisive test---pricing path-dependent structures under generated
measures, where post-breach volatility persistence carries P\&L---is left
for the next iteration, on real index data. \textbf{(5) Scale.} $d=64$,
$0.5$M parameters, minutes of training: deliberately small, fully
reproducible; conclusions about large-scale behaviour require care,
though the invariance theorem itself is scale-free.

\bigskip
\noindent\textbf{Reproducibility.} \texttt{dspm\_core.py} (sampler, chain,
identities, calibration), \texttt{dspm\_net.py} (denoiser, sampler),
scripts \texttt{01}--\texttt{09} (validation, calibration, v1 training,
evaluation, invariance, blind ablation, v2 training and evaluation); fixed
seeds throughout.


\begin{thebibliography}{99}\small

\bibitem[Baeumer and Meerschaert(2010)]{baeumer2010}
B.~Baeumer and M.~M.~Meerschaert.
Tempered stable L\'evy motion and transient super-diffusion.
\emph{J. Comput. Appl. Math.}, 233:2438--2448, 2010.

\bibitem[Barndorff-Nielsen and Shephard(2001)]{bns2001}
O.~E.~Barndorff-Nielsen and N.~Shephard.
Non-Gaussian Ornstein--Uhlenbeck-based models and some of their uses in
financial economics.
\emph{J. R. Stat. Soc. B}, 63(2):167--241, 2001.

\bibitem[Baule(2026)]{baule2026}
R.~Baule.
Score-based generative models with jump-diffusion forward processes.
Working paper, 2026.

\bibitem[Beck and Cohen(2003)]{beck2003}
C.~Beck and E.~G.~D.~Cohen.
Superstatistics.
\emph{Physica A}, 322:267--275, 2003.

\bibitem[Bollerslev(1986)]{bollerslev1986}
T.~Bollerslev.
Generalized autoregressive conditional heteroskedasticity.
\emph{J. Econometrics}, 31:307--327, 1986.

\bibitem[Carr et~al.(2002)]{cgmy2002}
P.~Carr, H.~Geman, D.~B.~Madan, and M.~Yor.
The fine structure of asset returns: an empirical investigation.
\emph{J. Business}, 75(2):305--332, 2002.

\bibitem[Carr et~al.(2003)]{carr2003sv}
P.~Carr, H.~Geman, D.~B.~Madan, and M.~Yor.
Stochastic volatility for L\'evy processes.
\emph{Math. Finance}, 13(3):345--382, 2003.

\bibitem[Chechkin et~al.(2017)]{chechkin2017}
A.~V.~Chechkin, F.~Seno, R.~Metzler, and I.~M.~Sokolov.
Brownian yet non-Gaussian diffusion: from superstatistics to subordination
of diffusing diffusivities.
\emph{Phys. Rev. X}, 7:021002, 2017.

\bibitem[Chubynsky and Slater(2014)]{chubynsky2014}
M.~V.~Chubynsky and G.~W.~Slater.
Diffusing diffusivity: a model for anomalous, yet Brownian, diffusion.
\emph{Phys. Rev. Lett.}, 113:098302, 2014.

\bibitem[Clark(1973)]{clark1973}
P.~K.~Clark.
A subordinated stochastic process model with finite variance for
speculative prices.
\emph{Econometrica}, 41(1):135--155, 1973.

\bibitem[Cont(2001)]{cont2001}
R.~Cont.
Empirical properties of asset returns: stylized facts and statistical
issues.
\emph{Quant. Finance}, 1:223--236, 2001.

\bibitem[Devroye(1986)]{devroye1986}
L.~Devroye.
\emph{Non-Uniform Random Variate Generation}.
Springer, 1986.

\bibitem[Engle(1982)]{engle1982}
R.~F.~Engle.
Autoregressive conditional heteroscedasticity with estimates of the
variance of United Kingdom inflation.
\emph{Econometrica}, 50(4):987--1007, 1982.

\bibitem[Gopikrishnan et~al.(1999)]{gopikrishnan1999}
P.~Gopikrishnan, V.~Plerou, L.~A.~N.~Amaral, M.~Meyer, and H.~E.~Stanley.
Scaling of the distribution of fluctuations of financial market indices.
\emph{Phys. Rev. E}, 60:5305--5316, 1999.

\bibitem[Ho et~al.(2020)]{ho2020ddpm}
J.~Ho, A.~Jain, and P.~Abbeel.
Denoising diffusion probabilistic models.
\emph{NeurIPS}, 33:6840--6851, 2020.

\bibitem[Kanter(1975)]{kanter1975}
M.~Kanter.
Stable densities under change of scale and total variation inequalities.
\emph{Ann. Probab.}, 3(4):697--707, 1975.

\bibitem[Kim et~al.(2025)]{kim2025gbm}
Kim et~al.
Diffusion-based generative models for financial time series via geometric
Brownian motion.
arXiv:2507.19003, 2025.

\bibitem[Koponen(1995)]{koponen1995}
I.~Koponen.
Analytic approach to the problem of convergence of truncated L\'evy
flights towards the Gaussian stochastic process.
\emph{Phys. Rev. E}, 52:1197--1199, 1995.

\bibitem[Mantegna and Stanley(1994)]{mantegna1994}
R.~N.~Mantegna and H.~E.~Stanley.
Stochastic process with ultraslow convergence to a Gaussian: the truncated
L\'evy flight.
\emph{Phys. Rev. Lett.}, 73:2946--2949, 1994.

\bibitem[Mikosch and St\u{a}ric\u{a}(2000)]{mikosch2000}
T.~Mikosch and C.~St\u{a}ric\u{a}.
Limit theory for the sample autocorrelations and extremes of a GARCH(1,1)
process.
\emph{Ann. Statist.}, 28(5):1427--1451, 2000.

\bibitem[Montroll and Weiss(1965)]{montroll1965}
E.~W.~Montroll and G.~H.~Weiss.
Random walks on lattices. II.
\emph{J. Math. Phys.}, 6:167--181, 1965.

\bibitem[Nobis et~al.(2023)]{nobis2023}
G.~Nobis et~al.
Generative fractional diffusion models.
arXiv:2310.17638, 2023.

\bibitem[Pandey et~al.(2025)]{pandey2025tedm}
K.~Pandey et~al.
Heavy-tailed diffusion models.
\emph{ICLR}, 2025. arXiv:2410.14171.

\bibitem[Sabino and Cufaro~Petroni(2022)]{sabino2022}
P.~Sabino and N.~Cufaro~Petroni.
Fast simulation of tempered stable Ornstein--Uhlenbeck processes.
\emph{Comput. Statist.}, 37:2517--2551, 2022.

\bibitem[Shariatian et~al.(2024)]{shariatian2024dlpm}
D.~Shariatian, U.~Simsekli, and A.~Durmus.
Denoising L\'evy probabilistic models.
arXiv:2407.18609, 2024.

\bibitem[Sohl-Dickstein et~al.(2015)]{sohl2015}
J.~Sohl-Dickstein, E.~Weiss, N.~Maheswaranathan, and S.~Ganguli.
Deep unsupervised learning using nonequilibrium thermodynamics.
\emph{ICML}, 2015.

\bibitem[Song et~al.(2021)]{song2021sde}
Y.~Song, J.~Sohl-Dickstein, D.~P.~Kingma, A.~Kumar, S.~Ermon, and
B.~Poole.
Score-based generative modeling through stochastic differential equations.
\emph{ICLR}, 2021.

\bibitem[Yoon et~al.(2023)]{yoon2023lim}
E.~B.~Yoon, K.~Park, S.~Kim, and S.~Lim.
Score-based generative models with L\'evy processes.
\emph{NeurIPS}, 36, 2023.

\bibitem[Anonymous(2026)]{heavytails2026}
Do heavy tails help diffusion? A statistical-estimation perspective.
Working paper, 2026.

\end{thebibliography}
\end{document}